\newtheorem{theorem}{Theorem}[section]
\theoremstyle{definition}
\theoremstyle{remark}
\newtheorem{remark}[theorem]{Remark}
\numberwithin{equation}{section}
\newcommand{\e}{\epsilon}
\renewcommand{\k}{\kappa}
\newcommand{\ra}{\rightarrow}
\newcommand{\al}{\alpha}
\newcommand{\be}{\beta}
\newcommand{\sg}{\sigma}
\newcommand{\pa}{\partial}
\newcommand{\bv}{\bar{v}}
\newcommand{\bp}{\bar{p}}
\newcommand{\non}{\nonumber}
\begin{document}

\title[Type II 3D Shears]
{Stability Criteria and Turbulence Paradox Problem For Type II 3D Shears}

\author{Y. Charles Li}

\address{Department of Mathematics, University of Missouri, 
Columbia, MO 65211, USA}

\curraddr{}
\email{liyan@missouri.edu}

\thanks{}

\subjclass{Primary 76, 37; Secondary 35}
\date{}

\dedicatory{}

\keywords{Type II 3D shear, turbulence paradox, channel flow, hydrodynamic 
stability, phase space.}

\begin{abstract}
There are two types of 3D shears in channel flows: ($U(y,z),0,0$) and 
($U(y),0,W(y)$). Both are important in organizing the phase space structures 
of the channel flows. Stability criteria of the type I 3D shears were studied in [Li, 2010]. 
Here we study the stability criteria of the type II 3D shears. We also provide more support 
to the idea of resolution of a turbulence paradox, introduced in [Li and Lin, 2010], by 
studying a sequence of type II 3D shears.
\end{abstract}

\maketitle

\section{Introduction}

Studying the phase space structures of channel flows is an important and emerging area. Most of the 
existing works in this area is numerical. Like every other dynamical system study in a phase space, 
non-wandering objects like fixed points, periodic orbits etc. play a fundamental role in organizing the 
phase space structure. When the Reynolds number is infinite (i.e. zero viscosity), the corresponding 
3D Euler equations have two types of steady shears as fixed points in the phase space. 
\begin{itemize}
\item Type I 3D shears ($U(y,z),0,0$),
\item Type II 3D shears ($U(y),0,W(y)$);
\end{itemize}
where the boundaries of the channel are in the $y$-direction. When the Reynolds number is not infinite
but large, these shears turn into slowing drifting states under the corresponding 3D Navier-Stokes 
dynamics. By the concepts of rate condition and normal hyperbolicity of Fenichel \cite{Fen74} 
\cite{Fen77}, these slowing drifting states can be crucial in the transition to turbulence. Also the 
3D Navier-Stokes dynamics has fixed points like the so-called lower and upper branches \cite{Nag90}. 
When the Reynolds number approaches infinity, the lower branch approaches one of the type I 3D shears. 
How to distinguish this particular one from the rest of the type I 3D shears is an interesting question 
(also posted in a list of problems by Yudovich \cite{Yud03}). A condition satisfied by this particular
type I 3D shear was derived in \cite{LV09}. Again like every other dynamical system study in a phase 
space, the stability of these 3D shears is crucial in understanding the phase space structure. The 
stability criteria for type I 3D shears were studied in \cite{Li10}. Here we shall study the 
stability criteria for type II 3D shears. It turns out that the linearized 3D Euler equations 
can be casted into formally the same form as the classical Rayleigh equation for 2D shears ($U(y),0$). 
But the nature of the stability criteria is fundamentally different from that for the 2D shears. One 
can ask the question: What is the ``percentage'' among e.g. all type I 3D shears, that is unstable? 
The author's conjecture is:
\begin{itemize} 
\item The unstable percentage of type II 3D shears $>$ the unstable percentage of type I 3D shears
$>$ the unstable percentage of 2D shears.
\end{itemize}
Finally, we shall provide more support to the idea of resolution of the turbulence paradox, introduced 
in \cite{LL10} by studying a sequence of type II 3D shears. The turbulence paradox is also called 
Sommerfeld paradox which roughly says that the linear shear in the plane Couette flow is linearly stable 
for all values of the Reynolds number, while in experiments, transition from the linear shear to 
turbulence occurs when the Reynolds number is large enough. For more details on the turbulence paradox
and our resolution, see Section \ref{para} and the paper \cite{LL10}.

\section{Necessary Conditions For Instability}

The inviscid channel flow is governed by the 3D Euler equations
\begin{equation}
\pa_t u_i + u_j u_{i,j} = - p_{,i}, \quad u_{i,i} = 0 ; 
\label{Euler}
\end{equation}
where ($u_1,u_2,u_3$) are the three components of the fluid velocity along 
($x,y,z$) directions, and $p$ is the pressure. 
The boundary condition is the so-called non-penetrating condition
\begin{equation}
u_2(x, a, z) = 0, \quad u_2(x, b, z) = 0;
\label{bc}
\end{equation}
where $a<b$ are the boundary locations of the channel in $y$-direction.

We start with the type II 3D steady shear solutions of the 3D Euler equations:
\[
u_1 = U(y),\quad u_2 = 0, \quad u_3 = W(y), \quad p = p_0 \ (\text{a constant}).
\]
Of particular importance are those profiles which also satisfy the non-slip boundary condition
\[
U(a) = \al , \quad U(b) = \be , \quad W(a)=W(b)=0 ;
\]
where $\al$ and $\be$ are the velocities of the two walls of the channel. Such profiles may be the viscous 
limiting profiles when the viscosity approaches zero. Linearize the 3D Euler equations with the notations
\begin{eqnarray*}
& & u_1 = U(y) + \left [ e^{ik_1x +ik_3z -i\sg t} u(y) + c.c. \right ], \ u_2 = e^{ik_1x +ik_3z -i\sg t} v(y) + c.c.,  \\
& & u_3 = W(y) + \left [ e^{ik_1x +ik_3z -i\sg t} w(y) + c.c. \right ], \ p \ra p_0+\left [ e^{ik_1x +ik_3z -i\sg t} p(y) + c.c.  \right ];
\end{eqnarray*}
where $k_1$ and $k_3$ are the wave numbers, and $\sg$ is a complex constant; we obtain the linearized 3D Euler equations
\begin{eqnarray}
& & i(k_1U+k_3W-\sg ) u+ U'v  = - ik_1 p, \label{LC1} \\
& & i(k_1U+k_3W-\sg ) v   = - p' , \label{LC2} \\
& & i(k_1U+k_3W-\sg ) w +W'v  = - ik_3 p, \label{LC3} \\   
& & i k_1 u + v' + ik_3w = 0 . \label{LC4}
\end{eqnarray}
Two forms of simplified systems can be derived:
\begin{equation}
 v'' - \frac{k_1 U''+k_3W''}{k_1U+k_3W-\sg }v = (k_1^2+k_3^2)v, \label{ray1} 
\end{equation}
with the boundary condition $v(a)=v(b)=0$; and 
\begin{equation}
(k_1U+k_3W-\sg)^2 \left [ (k_1U+k_3W-\sg)^{-2} p' \right ]' = (k_1^2+k_3^2) p , \label{ray2}
\end{equation}
with the boundary condition $p'(a)=p'(b)=0$.

Notice that formally equation (\ref{ray1}) is in the same form as the clasical Rayleigh equation
where the 2D shear $U(y)$ is now replaced by $k_1U+k_3W$. We have the following extension of the 
classical Rayleigh's inflection-point theorem.
\begin{theorem}
If ($k_1,k_3$) is an unstable mode, then
\begin{equation}
\left ( k_1 U''+k_3W'' \right ) (y_*) = 0 \quad \text{for some } y_* \in (a,b). \label{gcon}
\end{equation}
\label{Raye}
\end{theorem}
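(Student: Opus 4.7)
The plan is to adapt the classical Rayleigh inflection-point argument directly to the reduced equation (\ref{ray1}), exploiting the observation already made in the paper that (\ref{ray1}) has exactly the form of the 2D Rayleigh equation with the single shear profile replaced by the linear combination $k_1 U + k_3 W$.

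First, I would set up the setting. By hypothesis $(k_1, k_3)$ is an unstable mode, so there exist a $\sg \in \mathbb{C}$ with $\im \sg > 0$ and a nontrivial $v$ satisfying (\ref{ray1}) together with the boundary conditions $v(a) = v(b) = 0$. Since $\im \sg > 0$, the quantity $k_1 U(y) + k_3 W(y) - \sg$ is bounded away from zero uniformly on $[a,b]$, so the coefficient in (\ref{ray1}) is continuous and all the integrations that follow are legitimate.

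Next, I would multiply (\ref{ray1}) by $\bar v$ and integrate from $a$ to $b$. Integration by parts on the $v''\bar v$ term, together with the boundary conditions, produces $-\int_a^b |v'|^2\, dy$. Rationalizing the singular coefficient gives
\[
\frac{k_1 U'' + k_3 W''}{k_1 U + k_3 W - \sg} = \frac{(k_1 U'' + k_3 W'')(k_1 U + k_3 W - \bar\sg)}{|k_1 U + k_3 W - \sg|^2},
\]
whose imaginary part (using that $U$, $W$, $k_1$, $k_3$ are real) equals $(k_1 U'' + k_3 W'')\,\im \sg / |k_1 U + k_3 W - \sg|^2$. Taking imaginary parts in the integrated identity, the $|v'|^2$, $|v|^2$, and $(k_1^2+k_3^2)|v|^2$ integrals are all real, so one is left with
\[
\im \sg \int_a^b \frac{k_1 U'' + k_3 W''}{|k_1 U + k_3 W - \sg|^2}\,|v|^2 \, dy = 0.
\]

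Finally, since $\im \sg > 0$ and $v$ is a genuine nontrivial eigenfunction, the weight $|v|^2/|k_1 U + k_3 W - \sg|^2$ is nonnegative and strictly positive on an open subset of $(a,b)$. Therefore the real continuous function $k_1 U'' + k_3 W''$ cannot remain of one strict sign throughout $(a,b)$, and by the intermediate value theorem it must vanish at some $y_* \in (a,b)$, which is precisely (\ref{gcon}). The argument is essentially routine; the only bookkeeping concern is verifying that the integration by parts is justified (handled by the bound $\im \sg > 0$ keeping the denominator away from zero) and that $v \not\equiv 0$ (which is part of the definition of an unstable mode). I do not anticipate a substantive obstacle—the content of the theorem is that the classical Rayleigh criterion transfers verbatim once $U$ is replaced by $k_1 U + k_3 W$.
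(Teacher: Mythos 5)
Your proposal is correct and follows essentially the same route as the paper: multiply (\ref{ray1}) by $\bar v$, integrate by parts using the Dirichlet boundary conditions, and take the imaginary part to obtain $\sg_i \int_a^b \frac{k_1U''+k_3W''}{|k_1U+k_3W-\sg|^2}|v|^2\,dy = 0$, from which the sign-change/vanishing of $k_1U''+k_3W''$ follows. The extra bookkeeping you supply (the denominator being bounded away from zero when $\im\sg>0$, and the intermediate value theorem step) is implicit in the paper's version but not a different argument.
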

\begin{remark}
Even though the Rayleigh's inflection-point theorem is a special case of the above theorem, the claim in 
the above theorem in general is fundamentally different from that of Rayleigh. Rayleigh's claim basically 
says that if ($U(y),0$) is linearly unstable, then $U(y)$ has an inflection point. Here on the other hand,
condition (\ref{gcon}) can be satisfied by most of type II 3D shears ($U(y),0,W(y)$). In fact, one can choose 
$k_1 = W''(y_*)$, $k_3 = -U''(y_*)$ for any $y_* \in (a,b)$, then (\ref{gcon}) is satisfied. Therefore, 
in general (\ref{gcon}) is a very weak necessary condition for the linear instability of ($U(y),0,W(y)$). 
This may indicate that type II 3D shears are more often to be unstable than 2D shears. The author's 
conjecture is that type I 3D shears are inbetween in terms of frequency of instability. 
\end{remark}
\begin{proof}
Multiplying (\ref{ray1}) by $\bv$,
\begin{equation}
\int_a^b \left [ |v'|^2 + (k_1^2+k_3^2) |v|^2 \right ] dy + \int_a^b \frac{k_1 U''+k_3W''}{k_1U+k_3W-\sg }
|v|^2 dy = 0,
\label{fr1}
\end{equation}
the imaginary part of which is 
\begin{equation}
\sg_i \int_a^b \frac{k_1 U''+k_3W''}{|k_1U+k_3W-\sg |^2 }|v|^2 dy = 0,
\label{fr2}
\end{equation}
where $\sg = \sg_r +i\sg_i$. Thus
\[
\left ( k_1 U''+k_3W'' \right ) (y_*) = 0 \quad \text{for some } y_* \in (a,b).
\]
\end{proof}

We also have the following extension of the Fj$\phi$rtoft's theorem. 
\begin{theorem}
If ($k_1,k_3$) is an unstable mode, then
\[
G''(y_0)\left [ G(y_0) - G(y_*) \right ] < 0 \quad \text{for some } y_0 \in (a,b),
\]
where $G = k_1U+k_3W$ and $y_*$ is the point at which $G''=0$ given by Theorem \ref{Raye}.
\end{theorem}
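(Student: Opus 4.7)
The plan is to mimic the classical Fj\o rtoft argument, but with the role of the base shear $U$ played by the combined function $G = k_1 U + k_3 W$, exploiting the algebraic identity that the linearized equation (\ref{ray1}) has exactly the Rayleigh form with $G$ in place of $U$.

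First I would return to identity (\ref{fr1}) from the proof of Theorem \ref{Raye} and rewrite it in terms of $G$:
\[
\int_a^b \left[|v'|^2 + (k_1^2 + k_3^2)|v|^2\right] dy + \int_a^b \frac{G''}{G - \sg}|v|^2\, dy = 0.
\]
Splitting $\frac{1}{G-\sg} = \frac{G - \sg_r}{|G-\sg|^2} + i\frac{\sg_i}{|G-\sg|^2}$, the imaginary part (under the instability assumption $\sg_i \neq 0$) is exactly (\ref{fr2}), namely
\[
\int_a^b \frac{G''}{|G - \sg|^2}|v|^2\, dy = 0,
\]
while the real part reads
\[
\int_a^b \left[|v'|^2 + (k_1^2 + k_3^2)|v|^2\right] dy + \int_a^b \frac{G''(G - \sg_r)}{|G-\sg|^2}|v|^2\, dy = 0.
\]

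Next I would exploit the vanishing of the imaginary identity to subtract off the constant $\sg_r - G(y_*)$ (where $y_*$ is the zero of $G''$ furnished by Theorem \ref{Raye}). Since $G(y_*)$ is a real constant, multiplying the imaginary identity by $\bigl(\sg_r - G(y_*)\bigr)$ and adding to the real part identity yields
\[
\int_a^b \frac{G''\left[G - G(y_*)\right]}{|G - \sg|^2}|v|^2\, dy = -\int_a^b \left[|v'|^2 + (k_1^2 + k_3^2)|v|^2\right] dy < 0,
\]
where the strict negativity uses that $v \not\equiv 0$ for a genuine unstable mode together with $k_1^2 + k_3^2 > 0$ (or the boundary conditions on $v$).

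Finally, since the integral on the left is strictly negative and the weight $|v|^2 / |G - \sg|^2$ is nonnegative, the continuous integrand $G''[G - G(y_*)]$ must be negative at some interior point $y_0 \in (a,b)$, giving the desired conclusion. The step requiring the most care is the subtraction step: one must verify that $G(y_*)$ (rather than $U(y_*)$ alone) is the correct constant to subtract, which is precisely why the theorem is naturally stated in terms of $G$ and not in terms of $U$ and $W$ separately. Beyond that, the argument is essentially a transcription of Fj\o rtoft's classical proof into the $G$-variable, with no serious analytical obstacle.
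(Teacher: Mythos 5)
Your argument is correct and is essentially identical to the paper's proof: take the real part of (\ref{fr1}), add $[\sg_r - G(y_*)]$ times the vanishing imaginary-part identity (\ref{fr2}), and conclude from the strict negativity of the resulting integral that the integrand must be negative somewhere. The paper carries out exactly this combination, so there is nothing to add.
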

\begin{proof}
The real part of equation (\ref{fr1}) is 
\[
\int_a^b \frac{G'' (G-\sg_r)}{|G-\sg |^2}|v|^2 dy = - \int_a^b \left [ |v'|^2 + (k_1^2+k_3^2) |v|^2 \right ] dy .
\]
From (\ref{fr2}), one has 
\[
[\sg_r - G(y_*)] \int_a^b \frac{G''}{|G-\sg |^2}|v|^2 dy = 0 ,
\]
where $y_*$ is chosen to be the point given by Theorem \ref{Raye}, $G''(y_*) = 0$. The above two equations 
imply that 
\[
\int_a^b \frac{G''(y) [G(y)-G(y_*)]}{|G(y)-\sg |^2}|v|^2 dy < 0 , 
\]
and the theorem is proved.
\end{proof}

Next we show the extension of Howard's semi-circle theorem. 
\begin{theorem} 
If ($k_1,k_3$) is an unstable mode, then its corresponding unstable eigenvalue 
lies inside the semi-circle in the complex plane:
\[
\left ( \sg_r -\frac{M+m}{2} \right )^2 +\sg_i^2 \leq   \left (  \frac{M-m}{2}  \right )^2 , 
\]
where again $\sg = \sg_r +i\sg_i$, $M = \max_{y\in [a,b]} (k_1U+k_3W)$, and $m = \min_{y\in [a,b]} (k_1U+k_3W)$.
\end{theorem}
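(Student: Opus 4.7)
The plan is to adapt Howard's classical semi-circle argument to the extended Rayleigh equation (\ref{ray1}), treating $G(y) = k_1 U(y) + k_3 W(y)$ exactly as the classical proof treats the 2D shear profile $U(y)$. Since $\sigma_i > 0$ and $G$ is real, the quantity $G - \sigma$ never vanishes on $[a,b]$, so I may introduce the new unknown
\begin{equation*}
F(y) = \frac{v(y)}{G(y) - \sigma},
\end{equation*}
which is smooth and satisfies $F(a) = F(b) = 0$ (because $v(a) = v(b) = 0$ and $G - \sigma \ne 0$).

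A direct computation, using $v'' = G'' F + 2 G' F' + (G-\sigma) F''$ in equation (\ref{ray1}), collapses the singular term and yields
\begin{equation*}
\bigl[(G-\sigma)^2 F'\bigr]' = (k_1^2 + k_3^2)(G-\sigma)^2 F.
\end{equation*}
I would then multiply by $\bar F$, integrate over $[a,b]$, and integrate by parts once. The boundary contribution vanishes thanks to $F(a) = F(b) = 0$, leaving the single identity
\begin{equation*}
\int_a^b (G-\sigma)^2\, Q(y)\, dy = 0, \qquad Q(y) := |F'|^2 + (k_1^2 + k_3^2) |F|^2 \ge 0.
\end{equation*}

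The rest is algebraic. Splitting $(G-\sigma)^2 = (G-\sigma_r)^2 - \sigma_i^2 - 2i\sigma_i(G-\sigma_r)$ and taking imaginary and real parts gives, respectively,
\begin{equation*}
\int_a^b (G - \sigma_r)\, Q\, dy = 0, \qquad \int_a^b (G - \sigma_r)^2\, Q\, dy = \sigma_i^2 \int_a^b Q\, dy.
\end{equation*}
Since $m \le G(y) \le M$ on $[a,b]$, the integrand $(G-m)(G-M)\, Q$ is pointwise nonpositive, so
\begin{equation*}
\int_a^b (G-m)(G-M)\, Q\, dy \le 0.
\end{equation*}
Expanding this inequality and substituting the two identities above (which evaluate $\int G\, Q\, dy$ and $\int G^2\, Q\, dy$ in terms of $\sigma_r$, $\sigma_i$, and $\int Q\, dy$) reduces it to
\begin{equation*}
\bigl[\sigma_r^2 + \sigma_i^2 - (M+m)\sigma_r + Mm \bigr] \int_a^b Q\, dy \le 0,
\end{equation*}
and since $\int Q\, dy > 0$ (else $v \equiv 0$), this is exactly the claimed semi-circle inequality after completing the square.

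The only subtlety, and the step I would double-check carefully, is the change of variables $v = (G-\sigma) F$: it requires $G - \sigma$ to be nonvanishing and smooth, which is automatic from $\sigma_i > 0$, and it requires the boundary condition $v(a) = v(b) = 0$ to pass cleanly to $F(a) = F(b) = 0$ so that the integration by parts has no boundary terms. Once that substitution is in place, the argument is essentially Howard's original proof with $U$ replaced by $G = k_1 U + k_3 W$.
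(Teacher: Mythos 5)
Your argument is correct, and every step checks out: the substitution $v=(G-\sg)F$ is legitimate because $\sg_i>0$ keeps $G-\sg$ bounded away from zero, the identity $[(G-\sg)^2F']'=(k_1^2+k_3^2)(G-\sg)^2F$ follows from a direct computation in (\ref{ray1}), and the two moment identities plus the pointwise sign of $(G-m)(G-M)Q$ deliver the semi-circle after completing the square. However, you reach the theorem by a route different from the paper's. The paper works with the pressure formulation (\ref{ray2}) rather than the velocity equation (\ref{ray1}): it multiplies (\ref{ray2}) by $\bp(G-\sg)^{-2}$, integrates by parts using the Neumann-type boundary condition $p'(a)=p'(b)=0$, and obtains the same two identities $\int_a^b GQ\,dy=\sg_r\int_a^b Q\,dy$ and $\int_a^b G^2Q\,dy=(\sg_r^2+\sg_i^2)\int_a^b Q\,dy$, but with the weight $Q=|G-\sg|^{-4}\left[|p'|^2+(k_1^2+k_3^2)|p|^2\right]$ built from the pressure; the final step $\int_a^b(G-m)(M-G)Q\,dy\ge 0$ is identical in structure to yours. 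The two approaches are parallel variants of Howard's argument. Yours (the substitution $F=v/(G-\sg)$ in the Rayleigh-type equation) is the more classical one and avoids having to derive or manipulate the pressure equation at all; the paper's version exploits the already-derived equation (\ref{ray2}) and so requires no change of unknown, at the cost of a slightly more delicate integration by parts with the $(G-\sg)^{-2}$ weight. Both yield nonnegative weights $Q$ with $\int_a^b Q\,dy>0$ for a nontrivial eigenfunction, which is all the final algebra needs.
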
 
\begin{proof}
Multiply (\ref{ray2}) with $\bp (G-\sg )^{-2}$, integrate by parts, and split into real and imaginary 
parts; we obtain that 
\begin{eqnarray}
& & \int_a^b G Q dy = \sg_r \int_a^b Q dy , \label{hw1} \\
& & \int_a^b G^2 Q dy  = 2\sg_r \int_a^b G Q dy + (\sg_i^2 - \sg_r^2) \int_a^b Q dy \non \\
& & = (\sg_r^2 +\sg_i^2) \int_a^b Q dy \label{hw2} 
\end{eqnarray}
by (\ref{hw1}), where
\begin{eqnarray*}
G &=& k_1U+k_3W , \\
Q &=& |G-\sg |^{-4} \left [ |p'|^2 + (k_1^2+k_3^2) |p|^2 \right ] .
\end{eqnarray*}
Let 
\[
M = \max_{y\in [a,b]} G , \quad m = \min_{y\in [a,b]} G ,
\]
then
\[
\int_a^b (G-m) (M-G) Q dy \geq 0 .
\]
Expand this inequality and utilize (\ref{hw1})-(\ref{hw2}), we arrive at the semi-circle 
inequality in the theorem.
\end{proof}

\section{Sufficient Conditions For Instability}

The zeroth mode ($k_1,k_3$)=($0,0$) is trivially neutrally stable, so our interest is focused 
upon non-zero modes. Without loss of generality, we assume $k_1 \neq 0$. Equation (\ref{ray1}) 
can be re-written in the form 
\[
v'' - \frac{U'' + \frac{k_3}{k_1}W''}{U + \frac{k_3}{k_1}W - \frac{\sg }{k_1}} v = 
k_1^2 \left [ 1+ \left (\frac{k_3}{k_1} \right )^2 \right ] v .
\]
We will keep $\k = k_3/k_1$ fixed and vary $k_1$. Denote by
\[
H = U + \k W, \ c =\frac{\sg }{k_1}, \ \al = k_1 \sqrt{1+\k^2} , 
\]
we obtain the following equivalent form of (\ref{ray1}),
\begin{equation}
 v'' - \frac{H''}{H-c }v = \al^2 v, \label{rie} 
\end{equation}
with the boundary condition $v(a)=v(b)=0$, which is in the same form as the classical Rayleigh equation 
\cite{LL10}. Thus we have the extension of the Tollmien's theorem on a sufficient condition for instability.
\begin{theorem}
If $H'(y) \neq 0$ for all $y \in (a,b)$, $H''(y_*) = 0$ for some $y_* \in (a,b)$, and the Sturm-Liouville 
operator 
\[
Lv = -v'' + \frac{H''}{H-H(y_*)}v
\]
has a negative eigenvalue under the Dirichlet boundary condition $v(a)=v(b)=0$; then the equation (\ref{rie})
has an unstable eigenvalue (in fact, an unstable curve $c=c(\al )$ for $\al$ in some interval).
\end{theorem}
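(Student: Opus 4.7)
The plan is to mimic the classical Tollmien neutral-mode construction. First I would observe that the apparent singularity of $H''/(H-c)$ at the real value $c=c_s:=H(y_*)$ is actually removable at $y=y_*$: since $H'(y_*)\neq 0$, the denominator has a simple zero there, while the numerator also vanishes because $H''(y_*)=0$, and L'H\^opital yields the finite limit $H'''(y_*)/H'(y_*)$. Hence $L$ is a regular Sturm-Liouville operator on $[a,b]$, and the hypothesis supplies an eigenfunction $v_s\not\equiv 0$ with $Lv_s=-\al_s^2 v_s$ and $v_s(a)=v_s(b)=0$ for some $\al_s>0$. In the notation of (\ref{rie}), the pair $(\al,c)=(\al_s,c_s)$ is a real neutral mode sitting on the boundary of the putative unstable region.

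\textbf{Analytic continuation of the branch.} Next I would view (\ref{rie}) together with the Dirichlet data as an analytic family in $c$ of eigenvalue problems with eigenvalue $-\al^2$. For $\im c>0$ the coefficient $H''/(H-c)$ is smooth in $y$ and holomorphic in $c$, so Kato-type perturbation theory yields a locally unique analytic branch $\al^2(c)$ of simple eigenvalues through the neutral point; simplicity of $-\al_s^2$ as an eigenvalue of $L$ follows from the elementary fact that any two Dirichlet solutions of a second-order ODE vanishing at $y=a$ must be proportional. Inverting, one obtains a branch $c=c(\al^2)$ near $\al^2_s$, and the theorem is reduced to the claim that $\im c(\al^2)>0$ for $\al^2$ on one side of $\al_s^2$, which will follow as soon as $\im\,dc/d(\al^2)\neq 0$ at $\al_s$.

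\textbf{Main obstacle: the Plemelj computation.} The decisive step, and the only genuinely delicate one, is the evaluation of $dc/d(\al^2)$ at the neutral point. Differentiating the relation $\tilde L(c)v=-\al^2 v$ with $\tilde L(c)=-\pa_y^2+H''/(H-c)$ along the branch, pairing with $\bv_s$, and using $\pa_c\tilde L(c)=H''/(H-c)^2$, one obtains formally
\[
\frac{dc}{d(\al^2)}\bigg|_{\al=\al_s}\cdot\int_a^b\frac{H''(y)}{(H(y)-c_s)^2}|v_s(y)|^2\,dy\;=\;-\int_a^b|v_s|^2\,dy.
\]
The integral on the left must be interpreted as the limit from $\im c>0$. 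Thanks to the cancellation $H''(y_*)=0$ combined with the Taylor expansions $H''(y)=H'''(y_*)(y-y_*)+O((y-y_*)^2)$ and $H(y)-c_s=H'(y_*)(y-y_*)+O((y-y_*)^2)$, the integrand in fact has only a simple pole at $y_*$, and the Plemelj-Sokhotski formula (equivalently, a direct contour-indentation around the critical layer, exactly as in the Lin-Tollmien analysis reviewed in \cite{LL10}) produces an imaginary contribution proportional to
\[
\frac{H'''(y_*)}{[H'(y_*)]^2}\,|v_s(y_*)|^2,
\]
which is nonzero under the generic non-degeneracy $H'''(y_*)\neq 0$ and $v_s(y_*)\neq 0$. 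Consequently $\im\,dc/d(\al^2)\neq 0$ at $\al_s$, and pushing $\al^2$ off $\al_s^2$ in the appropriate direction yields a continuous curve of genuinely complex eigenvalues $c(\al)$ with $\im c>0$, which is precisely the unstable curve asserted by the theorem.
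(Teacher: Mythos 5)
The paper itself gives no proof of this theorem: having observed that (\ref{rie}) is formally the classical Rayleigh equation with the 2D profile $U$ replaced by $H=U+\k W$, it simply invokes the rigorous form of Tollmien's sufficient condition established in \cite{LL10}. You instead set out to reprove that underlying theorem, and you correctly identify the skeleton of the classical argument: the neutral mode $(\al_s, c_s=H(y_*))$ supplied by the negative eigenvalue of $L$, continuation in $\al^2$, and the Plemelj evaluation of the critical-layer integral. But as written the argument does not prove the theorem as stated, for two reasons.

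First, your conclusion hinges on the non-degeneracy conditions $H'''(y_*)\neq 0$ and $v_s(y_*)\neq 0$, neither of which is a hypothesis. The second is repairable for free: take the negative eigenvalue to be the lowest one, whose Dirichlet eigenfunction is nowhere zero on $(a,b)$. The first is not. If $H'''(y_*)=0$, the simple pole in your integrand disappears, the first-order imaginary contribution you compute vanishes, and your perturbation argument yields nothing, while the theorem still asserts instability. The rigorous proofs (Z.~Lin's, as used in \cite{LL10}) avoid this by a non-perturbative continuation argument: one shows the set of unstable wavenumbers is open and that a boundary point of it can only be a neutral limiting mode with $c=H(y_*)$ and $-\al^2$ an eigenvalue of $L$; no condition on $H'''$ enters. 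Second, the step ``Kato-type perturbation theory yields a locally unique analytic branch $\al^2(c)$ through the neutral point'' glosses over precisely the difficulty that kept Tollmien's argument heuristic for decades: the family $\tilde L(c)$ is a well-behaved analytic family only for $\im c>0$, since for real $c\neq c_s$ in the range of $H$ the coefficient $H''/(H-c)$ has a genuine critical-layer singularity. There is therefore no two-sided neighborhood of $c_s$ in which to apply Kato and then ``invert''; one must show that the dispersion relation extends continuously (with the Plemelj correction) to the closed upper half $c$-plane near $c_s$ and run an implicit-function argument on a half-disk. The quickest complete proof is the one the paper intends: since (\ref{rie}) is literally the Rayleigh equation for the profile $H$, cite the theorem of \cite{LL10} and stop.
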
 

\section{Turbulence Paradox \label{para}}

Turbulence paradox is also called Sommerfeld paradox. It originated from Sommerfeld's analysis which concluded 
that the linear shear in plane Couette flow is linearly stable for all values of the Reynolds number; whereas 
in fluid experiments, perturbations of the linear shear often lead to transition to turbulence. Such a paradox 
is universal among fluid flows, e.g. pipe Poiseuille flow, plane Poiseuille flow etc.. A resolution of this 
paradox is given in \cite{LL10}. The main idea of the resolution is to show that even though the linear shear 
is linearly stable, states arbitrarily close to the linear shear can still be linearly unstable. Here different 
norms are crucial. It is shown in \cite{LL10} that the sequence of 2D shears ($U_n(y),0$) is linearly unstable 
for all $n$ and large enough Reynolds number (including infinity), where 
\begin{equation}
U_n(y) = y + \frac{A}{n} \sin (4n\pi y), \ \left ( \frac{1}{2}\frac{1}{4\pi} < A < \frac{1}{4\pi} \right ) .
\label{2s}
\end{equation}
Here $U_n(y)$ approaches the linear shear $U=y$ in the $L^\infty$ norm of velocity, but not in the 
$L^\infty$ norm of vorticity. Notice also that when the Reynolds number is not infinite but large, the 
shears ($U_n(y),0$) are not steady states rather slowly drifting states. The linear instability mentioned
above is predicted by the Orr-Sommerfeld operator (linearized 2D Navier-Stokes operator) when the 
shears ($U_n(y),0$) are viewed frozen. Such an instability is in the spirit of Fenichel's rate condition 
and normal hyperbolicity \cite{Fen74} \cite{Fen77}.  Such an instability is  also observed numerically 
\cite{LL10b}. 

In this paper, we would like to add more support to the idea of resolution mentioned above by considering 
the type II 3D shears ($U(y),0,W(y)$).

The viscous channel flow is governed by the Navier-Stokes equations
\begin{equation}
\pa_t u_i + u_j u_{i,j} = - p_{,i} +\e u_{i,jj} , \quad u_{i,i} = 0 ; 
\label{NS-Couette}
\end{equation}
where again ($u_1,u_2,u_3$) are the three components of the fluid velocity along 
($x,y,z$) directions, $p$ is 
the pressure, and $\e = 1/R$ is the inverse of the Reynolds number $R$. 
The boundary condition is
\begin{equation}
u_1(x, a, z) = \al , \quad u_1(x, b, z) = \be , \quad
u_j(x, a, z) = u_j(x, b, z) = 0, (j=2,3);
\label{BC-Couette}
\end{equation}
where $a<b$, $\al < \be$. For the viscous channel flow,
the type II 3D shears mentioned above are no longer fixed points, instead they 
drift slowly in time (sometimes called quasi-steady solutions):
\[
\left ( e^{\e t \pa_y^2 } U(y), 0, e^{\e t \pa_y^2 } W(y) \right ).
\]
By ignoring the slow drift and pretending they are still fixed points
(or by using artificial body forces to stop the drifting), 
their unstable eigenvalues will lead to transient nonlinear 
growths as shown numerically \cite{LL10b}. A better explanation here is to 
use the theory of geometric singular perturbation \cite{Fen74} \cite{Fen77} \cite{Fen79} \cite{LW97}.
The slowly drifting 3D shears altogether form a locally invariant 
slow (center) manifold. The normal direction growth rate (or decay rate)
of this slow manifold has a persistence property (i.e. robust). Thus the 
growth rate can be estimated by ignoring the slow drift. The geometric 
singular perturbation theory implies the transient nonlinear 
growth induced by the linear growth rate.

The corresponding linear Navier-Stokes 
operator at ($U(y), 0, W(y)$) is given by the following counterpart of 
(\ref{LC1})-(\ref{LC4}), 
\begin{eqnarray}
& & i(k_1U+k_3W-\sg ) u+ U'v  = - ik_1 p +\e [u'' - (k_1^2+k_3^2)u] , \label{vLC1} \\
& & i(k_1U+k_3W-\sg ) v   = - p' +\e [v'' - (k_1^2+k_3^2)v] , \label{vLC2} \\
& & i(k_1U+k_3W-\sg ) w +W'v  = - ik_3 p +\e [w'' - (k_1^2+k_3^2)w] , \label{vLC3} \\   
& & i k_1 u + v' + ik_3w = 0 . \label{vLC4}
\end{eqnarray}
The simplified system as the counterpart of (\ref{ray1}) is
\begin{equation}
\frac{\hat{\e}}{i\al} \left [ \frac{d^2}{dy^2} - \al^2 \right ]^2 v + H'' v -(H-c) 
\left [ \frac{d^2}{dy^2} - \al^2 \right ]v = 0,
\label{nsi}
\end{equation}
with the boundary condition $v=v'=0$ at $y=a,b$; where $\hat{\e} = \e \sqrt{1+\k^2}$, as before 
$\k = k_3/k_1$ (fixed), $\al = k_1 \sqrt{1+\k^2}$, $c = \sg / k_1$, 
and $H=U+\k W$. Now both equation (\ref{rie}) and  equation (\ref{nsi}) are formally in the same form as 
those in \cite{LL10}. We can specify
\[
a=0, \ b=1, \ \al =0, \ \be = 1 ;
\]
and consider the following sequence of type II 3D shears
\begin{equation}
(U_n(y), 0, W_n(y)) =\left ( y, 0,\frac{A}{n} \sin (4n\pi y) \right ).
\label{sps}
\end{equation}
The problem of linear instability of the type II 3D shears (\ref{sps}) is then casted into the same problem 
as that of the 2D shears (\ref{2s}). By the results of \cite{LL10}, we have
\begin{theorem}
For any $A>0$ and any integer $n \geq 1$, the type II 3D shears (\ref{sps}) are linearly unstable under the 
3D Euler dynamics. Specifically, there exists a 2D unstable eigenmode surface ($k_1,k_3;\sg (k_1,k_3)$)
with $\text{Im}\{ \sg (k_1,k_3)\} >0$ for the equation (\ref{ray1}), stemming from neutral modes of the 
form ($k_1(n),k_3(n);\frac{1}{2}k_1(n)$) where $[k_1(n)]^2 + [k_3(n)]^2 \geq C n^2$ ($C>0$ is independent 
of $n$). The corresponding eigenfunctions are in $C^\infty (0,1)$.
\label{rth}
\end{theorem}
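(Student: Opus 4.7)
The plan is to reduce the 3D linearized Euler problem for $(U_n,0,W_n)$ to the Rayleigh-type equation (\ref{rie}) already derived in the paper, and then to recognize that this reduced equation is \emph{formally identical} to the one treated in \cite{LL10} for the 2D shear sequence (\ref{2s}). Explicitly, with the substitutions $\k = k_3/k_1$, $\al = k_1\sqrt{1+\k^2}$, $c = \sg/k_1$, the profile that appears in (\ref{rie}) is
\[
H_n(y) = U_n(y) + \k W_n(y) = y + \frac{\k A}{n}\sin(4n\pi y),
\]
which is precisely the 2D shear of (\ref{2s}) with amplitude $\k A$ in place of $A$.

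Since the amplitude $A>0$ in (\ref{sps}) is given but the ratio $\k = k_3/k_1$ is free, I would first fix any $\k$ so that $\k A$ lies in the admissible window $\bigl(\tfrac{1}{2}\cdot\tfrac{1}{4\pi},\,\tfrac{1}{4\pi}\bigr)$ required in \cite{LL10}. With this choice, equation (\ref{rie}) with $H=H_n$ coincides, after the obvious relabeling, with the classical Rayleigh equation treated in \cite{LL10}. Hence the results of \cite{LL10} apply and yield, for each $n$, a neutral mode at some wavenumber $\al_n$ and eigenvalue $c=1/2$, together with a branch of genuinely unstable modes $c=c(\al)$ with $\im c>0$ bifurcating from this neutral mode.

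Next I would translate this 2D spectral information back to $(k_1,k_3,\sg)$. The relations $k_1 = \al/\sqrt{1+\k^2}$, $k_3 = \k k_1$, $\sg = k_1 c$ convert the unstable interval in $\al$ into a curve in $(k_1,k_3)$; allowing $\k$ to vary in an open neighborhood of the fixed value above sweeps out the claimed 2D unstable eigenmode surface $(k_1,k_3;\sg(k_1,k_3))$ with $\im \sg>0$. At the neutral point one has $\sg_n = \tfrac{1}{2}k_1(n)$, and the lower bound on the neutral wavenumber follows because $H_n''(y) = -16\pi^2 \k A\, n \sin(4n\pi y)$ scales like $n$, forcing the bifurcation wavenumber to satisfy $\al_n^2 \geq C n^2$, i.e.\ $k_1(n)^2+k_3(n)^2 \geq C n^2$. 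Smoothness in $C^\infty(0,1)$ of each eigenfunction is a routine consequence of standard ODE regularity for (\ref{rie}), since $\im c>0$ keeps $H_n-c$ bounded away from zero on $[0,1]$.

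The main obstacle is not the algebraic translation above but the careful verification that the argument of \cite{LL10}---negative-eigenvalue analysis of the associated Sturm-Liouville operator, Tollmien-style bifurcation from a neutral mode, and the $n\to\infty$ persistence of unstable branches with large wavenumber---goes through \emph{verbatim} after the substitution $A \mapsto \k A$, and that the constant $C$ in $\al_n^2 \geq C n^2$ is indeed independent of $n$. This is essentially a bookkeeping task; once the admissibility condition $\k A \in \bigl(\tfrac{1}{8\pi},\tfrac{1}{4\pi}\bigr)$ is enforced, every step of \cite{LL10} transfers directly and the theorem follows.
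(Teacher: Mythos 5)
Your proposal is correct and follows essentially the same route as the paper: fix $\k = k_3/k_1$ so that $\k A$ lies in the window $\bigl(\tfrac{1}{2}\cdot\tfrac{1}{4\pi},\,\tfrac{1}{4\pi}\bigr)$, observe that $H_n = U_n + \k W_n$ reproduces the 2D shear sequence of \cite{LL10} with amplitude $\k A$, and invoke Theorem 3.2 of \cite{LL10}. The paper's proof is exactly this one-line reduction; your additional remarks on sweeping $\k$ to obtain the full 2D surface and on the $n$-scaling of the neutral wavenumber are consistent elaborations of the same argument.
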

\begin{proof}
Fix $\k = k_3/k_1$ such that
\[
\frac{1}{2}\frac{1}{4\pi} < \k A < \frac{1}{4\pi} ,
\]
then the problem is reduced to that of Theorem 3.2 in \cite{LL10}.
\end{proof}

\begin{theorem}
Let ($k_1^0,k_3^0;\sg^0$) be a point on the 2D unstable eigenmode surface given by Theorem \ref{rth}. Then 
when $\e$ is sufficiently small, there exists an unstable eigenmode ($k_1^0,k_3^0;\sg^*$) with 
$\text{Im}\{ \sg^*\} >0$ for the equation (\ref{nsi}) with $H$ given by (\ref{sps}). When $\e \ra 0^+$,
$\sg^* \ra  \sg^0$.
\end{theorem}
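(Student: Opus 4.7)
The plan is to reduce the viscous eigenvalue problem (\ref{nsi}) to the Orr--Sommerfeld problem already analyzed in \cite{LL10}, using the same substitution that proved Theorem \ref{rth}. With $\k = k_3^0/k_1^0$ fixed and $H = U + \k W$, the shear (\ref{sps}) gives
\[
H(y) = y + \frac{\k A}{n}\sin (4n\pi y),
\]
which, after rescaling $A \mapsto \k A$, is exactly the 2D shear $U_n(y)$ from (\ref{2s}). In the unknown $v$, equation (\ref{nsi}) is then formally identical to the Orr--Sommerfeld equation treated in \cite{LL10}, with $H$ playing the role of $U_n$, $\al$ the role of the 2D wave number, and $\hat{\e}$ the role of the inverse Reynolds number. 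Hence every spectral statement proved there transfers verbatim to (\ref{nsi}).

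Next, I would invoke the viscous persistence theorem of \cite{LL10}. One views (\ref{nsi}) as a singular perturbation of (\ref{rie}): at $\e = 0$ the unstable inviscid eigenvalue $c^0 = \sg^0/k_1^0$ with $\text{Im}\{c^0\} > 0$ is supplied by Theorem \ref{rth}, and the corresponding eigenfunction is smooth on $[0,1]$. For $\e > 0$ small, the extra term $(\hat{\e}/i\al)(d^2/dy^2-\al^2)^2 v$ is tiny away from the walls but generates boundary layers of width $O(\e^{1/2})$ near $y=0,1$ that are forced by the additional no-slip conditions $v'(0)=v'(1)=0$. An outer solution close to the Rayleigh eigenfunction, matched through standard Prandtl matching to inner layer solutions, yields a uniform approximate eigenfunction; a Lyapunov--Schmidt argument in a suitable weighted norm then produces a genuine eigenvalue $\sg^*$ within $O(\e^{1/2})$ of $\sg^0$, whose positive imaginary part persists because $\text{Im}\{\sg^0\}$ is bounded away from zero.

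The hard part is the singular character of the perturbation: $\e$ multiplies the highest-order derivative, so ordinary analytic perturbation of the resolvent is unavailable and one must either carry the matched asymptotics through carefully, controlling the boundary-layer corrections in an appropriate norm, or instead run a Rouch\'e-type argument on a characteristic (Evans) determinant, tracking its zero from $\e=0$ to $\e > 0$ using analytic dependence together with the uniform non-degeneracy of the inviscid eigenvalue. Both routes are already carried out in \cite{LL10} in the 2D Orr--Sommerfeld setting, and because the 3D reduction performed in the proof of Theorem \ref{rth} only relabels coefficients, none of those estimates are affected. The convergence $\sg^* \ra \sg^0$ as $\e \ra 0^+$ then follows directly from the same matched-asymptotic expansion.
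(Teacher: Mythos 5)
Your proposal is correct and follows essentially the same route as the paper: fix $\k = k_3^0/k_1^0$ so that $\hat{\e}$ and $\e$ differ only by the constant factor $\sqrt{1+\k^2}$, observe that (\ref{nsi}) with $H$ from (\ref{sps}) becomes formally the Orr--Sommerfeld problem of \cite{LL10}, and invoke the viscous persistence result (Theorem 4.1) there. The paper's proof is just this two-line reduction; your additional sketch of how the persistence is established inside \cite{LL10} is supplementary detail rather than a different argument.
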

\begin{proof}
For the fixed ($k_1^0,k_3^0$), $\k$ has a fixed value; then $\hat{\e}$ and $\e$ are equivalent. The 
problem is reduced to that of Theorem 4.1 in \cite{LL10}.
\end{proof}
\begin{remark}
Notice that the sequence of type II 3D shears (\ref{sps}) are linearly unstable for all $A>0$, while 
the sequence of 2D shears (\ref{2s}) are proved linearly unstable for 
$\frac{1}{2}\frac{1}{4\pi} < A < \frac{1}{4\pi}$. The sequence of type II 3D shears (\ref{sps}) also
approaches the linear shear $U=y$ in the $L^\infty$ norm of velocity, but not in the 
$L^\infty$ norm of vorticity.
\end{remark}

\end{document}